\documentclass{ifacconf}

\usepackage{graphicx}      
\usepackage{natbib}        
\usepackage{amsfonts} 
\usepackage{algorithm}
\usepackage{algorithmicx}
\usepackage{algpseudocode}
\usepackage{textcomp}
\usepackage{xcolor}
\usepackage{cases}
\usepackage{amssymb}

\usepackage{amsmath}

\usepackage{amsmath}
\usepackage{listings}
\usepackage{bbding}
\usepackage{multirow}
\usepackage{stfloats}
\usepackage{tikz}
\usepackage{algpseudocode}
\usepackage{url}

\newenvironment{proof}{\noindent\textit{Proof.}}{\hfill$\square$\par}

\newtheorem{assumption}{Assumption}
\newtheorem{definition}{Definition}
\newtheorem{lemma}{Lemma}
\newtheorem{theorem}{Theorem}
\newtheorem{problem}{Problem}
\begin{document}
\begin{frontmatter}

\title{Lure-and-Reveal: An Exposure Framework for Stealthy Deception Attack in Multi-sensor Uncertain Systems} 

\thanks[footnoteinfo]{This work is supported by the Guangdong Provincial Project (No.2024QN11X053) and by the Youth S\&T Talent Support Programme of GDSTA (No. SKXRC2025468). (Corresponding author: Bingzhuo Zhong.)}

\author[First]{Meiqi Tian,} 
\author[First]{Yihan Liu,} 
\author[Second]{Bingzhuo Zhong}

\address[First]{The Thrust of Artificial Intelligence, Information Hub, Hong Kong
University of Science and Technology (Guangzhou), Guangzhou
511400, China (e-mail: mtian837@connect.hkust-gz.edu.cn, yliu135@connect.hkust-gz.edu.cn).}
\address[Second]{The Thrust of Intelligent Transportation, System Hub, Hong Kong
University of Science and Technology (Guangzhou), Guangzhou
511400, China (e-mail: bingzhuoz@hkust-gz.edu.cn).}

\begin{abstract}                
Multi-sensor integration via error-state Kalman filter (ES-KF) is widely employed for precise state estimation in cyber-physical systems (CPSs). 
However, this integration exposes the system to stealthy deception attacks that render conventional detection mechanisms ineffective.
We propose an exposure framework to actively reveal such stealthy attacks without modifying sensor interfaces.
The framework introduces a suspect mode in which the defender injects random exposure shakes into the nominal control inputs, thus creating a discrepancy between the defender’s true state estimates and the attacker’s manipulated state estimates, preventing the attack from remaining stealthy.
We further derive an explicit exposure condition that characterizes the minimum shake magnitude to guarantee the finite-time exposure and a compensability condition that ensures the shakes do not degrade closed-loop performance.
Simulation results based on a GNSS/INS-integrated UAV system verify the effectiveness of the proposed framework.
\end{abstract}

\begin{keyword}
Stealthy Attack Detection, Cyber-physical system, Multi-Sensor Fusion, Error-State Kalman Filter, Uncertain systems
\end{keyword}

\end{frontmatter}

\section{Introduction}
Multi-sensor integration has become an imperative part in modern cyber-physical systems (CPSs), where heterogeneous sensing modalities are fused to generate accurate and consistent state estimates. 
However, the sensor coupling and high reliance on estimator feedback also introduce new vulnerabilities due to the exposure of wireless communication channels and the open operational environments. 
In particular, stealthy deception attacks pose a severe security threat: an intelligent adversary can manipulate one or more sensing streams while remaining undetectable from conventional detectors (\cite{ren2021kullback, geng2024covert}). 
Such attacks may be persistent, enabling the adversary to bias the estimation process over long horizons without raising alarms.

A common line of research relies on modifying the information flow during transmission to expose malicious tampering. 
Watermarking-based approaches perturb the innovation or output signals with encrypted or random patterns and check whether the returned measurements are statistically consistent with the watermark (\cite{wang2021secure, mo2015physical, ahmed2022practical}). In addition, cryptographic solutions, including message authentication codes (\cite{rangwani2021improved}), digital signatures (\cite{prashar2021sdswsn}), and secure timestamping (\cite{zhao2023blockchain}), strengthen measurement integrity by binding each measurement to a cryptographically verifiable identity. Another line of defense that employs active modification of measurement and control packets has been shown to be effective for exposing stealthy false-data injection attacks (\cite{pang2021detection}). 
These methods provide strong detectability guarantees but require direct access to sensor interfaces and I/O communication channels, or additional bandwidth, which are often unavailable in embedded multi-sensor CPSs with fixed sensor firmware and hardware-protected communication channels.

Resilient control frameworks aim to maintain closed-loop stability and good performance under adversarial interference. Examples include resilient CPS architectures (\cite{kim2021stealthy}) and secure model predictive control schemes subject to bounded deception with a known occurrence probability (\cite{wang2020security}). Reinforcement-learning-based secure tracking (\cite{wu2023secure}) and game-theoretic adversarial planning (\cite{athalye2024output}) have also been proposed to mitigate estimation corruption without explicit attack models. However, these strategies typically focus on enhancing robustness rather than revealing the presence of stealthy deception attacks. Consequently, the controller may sacrifice nominal performance to accommodate worst-case disturbances.

In summary, existing defenses either (i) require modifying sensor interfaces or I/O communication channels, (ii) rely on prior assumptions about attack models, or (iii) focus on maintaining robustness without exposing stealthy deception. 
These gaps motivate the exposure framework developed in this paper, and the main contributions are as follows: 
\begin{enumerate}
    \item We present an exposure framework for stealthy deception attacks that operates at the controller level. 
    Random excitations, called \emph{exposure shakes}, are generated to induce a divergence between the defender’s and the attacker’s internal state estimates. This mechanism is compatible with existing conventional detection methods and can be deployed in systems where sensor interfaces are fixed or inaccessible.
    \item We derive two rigorous conditions that characterize the admissible range of the shake magnitude: a finite-time exposure condition that guarantees a stealthy deception attack can be revealed within a given time interval; a compensability condition that ensures the resulting closed-loop deviation remains uniformly bounded. The feasibility interval implied by these two conditions guarantees both finite-time exposure and tolerable state deviation.
\end{enumerate}

The remainder of this paper is organized as follows. Section 2 introduces the problem formulation. 
Section 3 presents the design and theoretical analysis of the stealthy deception attack exposure framework. 
Section 4 shows the simulation results, and Section 5 concludes the paper.

\section{Preliminaries}

\subsection{Notation}
\(\mathbb{N}\) denotes the set of natural numbers. $\mathbb{R}^{n}$ and $\mathbb{R}^{n\times m}$ denote the set of all $n$-dimensional real vectors and $n \times m$ real matrices. Let $I$ denote the identity matrix with an appropriate dimension. $A^\top$ denotes the transpose of matrix \(A\). 
Consider two vectors $a, b \in \mathbb{R}^n$. We write $a \le b$ if each element of $a$ is less than or equal to the corresponding element of $b$.
For any vector \(x \in \mathbb{R}^n\), \(\|x\| = \sqrt{x^\top x}\). \(\|x\|_\infty = \max\limits_{i}|x_i|\) denotes the maximum norm of \(x\). 

\subsection{System Model}
We consider the following discrete-time linear time-invariant systems (dtLTS):
\begin{align}
     x_{k+1} &=Ax_k+Bu_k+w_k, \label{eq: system model 1}
\end{align}
where \(x_k\in \mathbb{R}^{n_x}\) denotes the system state, \(u_k\in \mathbb{R}^{n_u}\) denotes the control input, and $w_k$ denotes the process noise with \(w_k^\top w_k\leq \bar w, \bar w\in \mathbb{R}^+\). 
\(A \in \mathbb{R}^{n_x \times n_x}\) and \(B \in \mathbb{R}^{n_x\times n_u}\) are matrices of compatible dimensions.

In this paper, sensors are categorized into two types based on their susceptibility to attack. \textit{Reliable sensors} are those whose measurements are considered trustworthy and provide accurate short-term estimates. In contrast, \textit{suspicious sensors} are sensing channels that are susceptible to deception attacks while providing valuable measurements for accurate state estimation. 

The drift and accumulated errors of reliable sensor are corrected with the measurements from the suspicious sensor by an error-state Kalman filter (ES-KF), which is expressed as
\begin{equation}\label{eq: estimate}
    \hat x_k = x^n_k + \delta \hat x_k,
\end{equation}
where \(\hat x_k \in \mathbb{R}^{n_x}\) denotes the estimated state, $x^n_k \in \mathbb{R}^{n_x}$ denotes the nominal state provided by reliable sensor, and $\delta \hat x_k \in \mathbb{R}^{n_x}$ denotes the state compensation.

\begin{figure}
    \centering
    \includegraphics[width=1\linewidth]{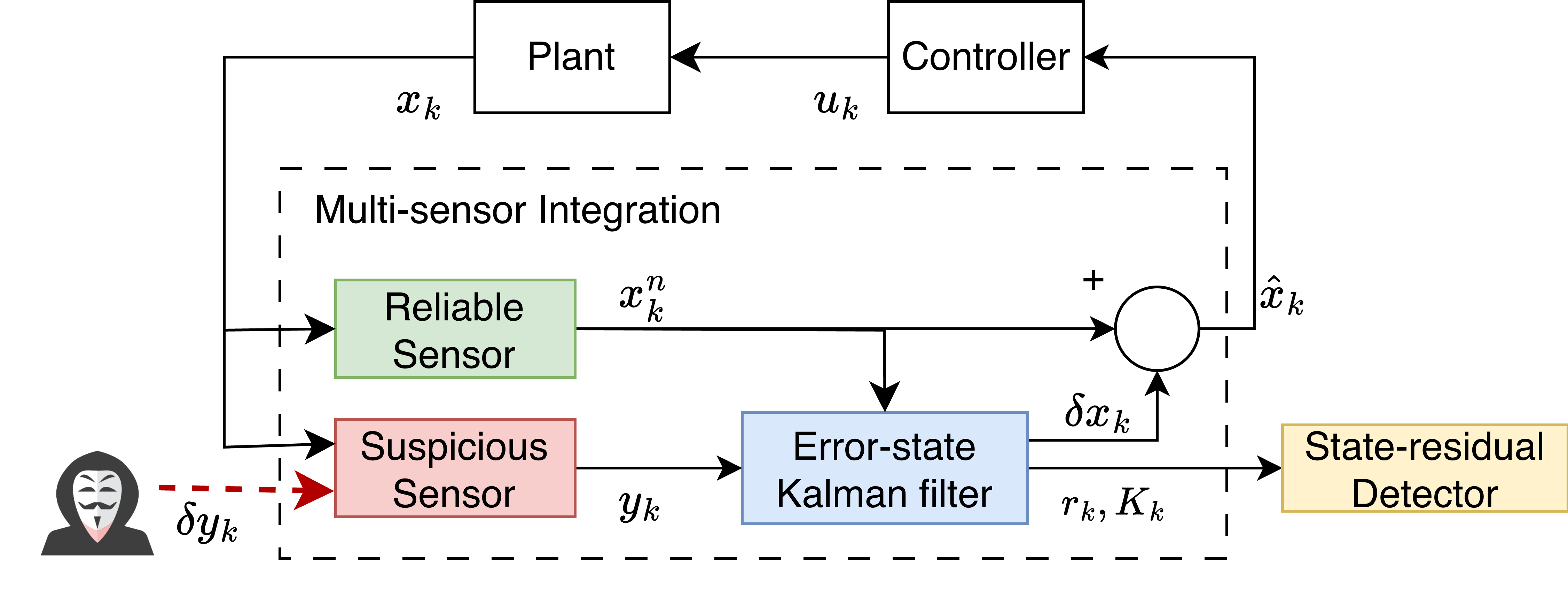}
    \caption{General architecture of CPS under deception attack.}
    \label{fig: system under attack}
\end{figure}

A general structure of the CPS subject to deception attacks is depicted in Fig. \ref{fig: system under attack}. 
Under attack, the corrupted measurement of the suspicious sensor is represented as
\begin{equation}
    y_k = y^{*}_k + \delta y_k,  \label{eq: sensor a output}
\end{equation}
where \(y^{*}_k \in \mathbb{R}^{n_y}\) and $\delta y_k \in \mathbb{R}^{n_y}$ denote the original measurement and the attack vector, respectively.

A state-residual detector is employed in the system to identify malicious biases.
During the ES-KF update, the state residual is
\begin{equation}\label{eq: state-residual}
    r_k = K_{k}(y_k-\hat y_k),
\end{equation}
where \(K_{k} \in \mathbb{R}^{n_x\times n_y}\) denotes the ES-KF gain matrix, 
\(y_k \in \mathbb{R}^{n_y}\) and \(\hat y_k \in \mathbb{R}^{n_y}\) denote the measurement of the suspicious sensor and the predicted measurement from the reliable sensor, respectively.

The state-residual detector is constructed as:
\begin{equation}\label{eq: detector}
    q_k = \max_i \frac{|r_{k,i}|}{\mathbf{T}_i}, 
\end{equation}
where \(q_k \in \mathbb{R}\) denotes the detection statistic at time step \(k\), \(r_k \in \mathbb{R}^{n_x}\) denotes the state residual vector, \(\mathbf{T} \in \mathbb{R}^{n_x}\) denotes the predefined threshold vector, \(r_{k,i}\) and \(\mathbf{T}_i\) denote the \(i\)-th row element of \(r_k\) and of \(\mathbf{T}\), respectively.
An attack is alarmed if \(q_k \geq 1\).

The normal-operation set associated with the state-residual detector \eqref{eq: state-residual} is therefore defined as
\begin{equation}\label{eq: normal-operation set}
\Gamma_k := \{ r_k \in \mathbb{R}^{n_x} : |r_{k,i}| \le \mathbf{T}_i,\ \forall i = 1,\dots,n_x \}.
\end{equation}
\subsection{Problem Formulation}\label{sec: Stealthy Deception Attack}
In this paper, we concern \emph{\(\beta\)-intolerable} attack as defined below.
\begin{definition}\label{def: intolerance attack}
Consider the attack as stated in \eqref{eq: sensor a output}, the detector as in \eqref{eq: detector}, and its normal-operation set $\Gamma_k$ at time step $k\in \mathbb{N}$, i.e., \(\Gamma_k := \{ r_k \in \mathbb{R}^{n_x} : |r_{k,i}| \le \mathbf{T}_i,\ \forall i = 1,\dots,n_x \}.\)
Let $[k_1,k_n]\subset \mathbb{N}$ denote the attack duration, and \(\{\delta y_{1}, \ldots, \delta y_{n}\}\) denote the attack sequence.
Given a prescribed $\beta >0$, the attack is  \emph{\(\beta\)-intolerable} stealthy attack if the induced state residual $r_i \in \mathbb{R}^{n_x}$ satisfies $r_i \in \Gamma_i$ for all $k \in [k_1,k_n]$, and \(\|\delta y_{1}\|_{\infty} \ge \beta\).
\end{definition}




We next specify the attacker’s capabilities and available knowledge in the following assumption.

\begin{assumption}\label{asm:attacker}
We assume that the attacker has perfect knowledge of: (i) the dynamics of dtLTS as in \eqref{eq: system model 1}, including \(A\), \(B\) and \(\bar w\); (ii) the ES-KF gain matrix $K$ as in \eqref{eq: state-residual}; and (iii) the detector threshold vector \(\mathbf{T}\) as in \eqref{eq: detector}.
\end{assumption}
Assumption \ref{asm:attacker} is a commonly used description of an omniscient/white-box attacker (\cite{zhang2024stealthy}).
Under Assumption~\ref{asm:attacker}, the attacker is able to construct its own internal estimation-and-deception process that mirrors the system’s estimates and state evolution, thus remaining stealthy. The attacker-side deception model is therefore given by
\begin{align}\label{eq: attacker dynamics}
     \begin{cases}
    \hat x^{a}_{k+1}= A\hat x_{k}^{a+} + Bu^a_{k}+ w^a_k,\\
    \hat x^{a+}_{k+1}= \hat x^{a}_{k+1}  + \Delta x^a_k,
    \end{cases}
\end{align}
where $\hat x^{a}_{k+1} \in \mathbb{R}^{n_x}$ denotes the estimated state of the target system by the attacker, \(A\) and \(B\) are the same matrices as in \eqref{eq: system model 1}, $w^a_k$ denotes the estimated process noise with ${w_k^a}^\top w_k^a \le \bar w$, $u^a_{k}$ denotes the control input generated by the same control law used by the system, \(\hat x^{a+}_{k+1}\) denotes the deceived state after launching an attack, and \(\Delta x^a_k\) is the induced deviation by the deception attack, which satisfies \(\Delta x^a_k\leq \mathbf{T}\) to maintain stealthiness.

In this paper, our goal is to reveal \(\beta\)-intolerable stealthy attacks within a given time interval.
\begin{definition}\label{def: detector}
    Consider the dtLTS as in \eqref{eq: system model 1}, the \(\beta\)-intolerable stealthy attack as defined in Definition~\ref{def: intolerance attack}, the attacker with capabilities as described in Assumption \ref{asm:attacker}, and the detector as in \eqref{eq: detector}. 
    Let \(q_k \in \mathbb{R}\) denote the detection statistic at time step \(k\), and \(k^a \in \mathbb{N}\) denote the initial occurrence time step of the  \(\beta\)-intolerable stealthy attack.
    Given a time interval \(\mathcal{K}:=[k^a, k^a+K ]\subset\mathbb{N}\), a \((\beta,K)\)-detector is a function
    \begin{equation}
        \Theta: q_k \mapsto \{0,1\},
    \end{equation}
    satisfying the following conditions:
    \begin{itemize}
        \item (C1) For any \(\beta\)-intolerable attack, one has \(\exists k' \in \mathcal{K},\ \Theta(q_{k'}) = 1;\)
        \item (C2) In the absence of an attack, one has \(\forall k \in \mathcal{K},\;\Theta(q_k)=0\).
    \end{itemize}
\end{definition}
Having the notation above, we are ready to state our main problem as follows.
\begin{problem}\label{problem}
    Consider the dtLTS as in \eqref{eq: system model 1}, the \(\beta\)-intolerable stealthy attack as defined in Definition~\ref{def: intolerance attack}, and the attacker with capabilities as described in Assumption \ref{asm:attacker}.
    We aim to construct a \((\beta, K)\)-detector as in Definition~\ref{def: detector} capable of revealing the \(\beta\)-intolerable stealthy deception attack within a maximum of \(K\) time steps after its initial occurrence, without indicating an attack if none has occurred. 
\end{problem}

\section{Stealthy Deception Attack Exposure Framework}
To address Problem \ref{problem}, we propose an exposure framework for stealthy deception attacks which has three operation modes,  as shown in Fig. \ref{fig: framwork}.
\begin{definition}\label{def: mode rule}
Consider the detector as in \eqref{eq: detector}. 
Let  $q_k\in\mathbb{R}$ denote the detection statistic at time step $k$.
Given a suspect threshold $\eta\in(0,1)$, the system mode is determined by the following rule: (i) \emph{Normal mode}: if $q_k < \eta$; (ii) \emph{Suspect mode}: if $\eta \le q_k < 1$; (iii) \emph{Attacked mode}: if $q_k \geq 1$.
\end{definition}

The quantitative relation between the suspect threshold \(\eta\) and tolerance degree \(\beta\) is formulated below.
\begin{lemma} \label{lem:delta-y-lower-bound}
Consider the detector as in \eqref{eq: detector}, and the \(\beta\)-intolerable stealthy attack as defined in Definition~\ref{def: intolerance attack}.
Given the tolerance degree \(\beta \in \mathbb{R}^+\),
if the suspect threshold \(\eta\) employed in Definition \ref{def: mode rule} satisfies
\begin{equation}
    \eta \;\le\;
    \bar\eta(\beta)
    := \frac{\|K_k\|_\infty \,\beta - \bar T}{\bar T},\label{hp}
\end{equation}
where \(K_{k} \in \mathbb{R}^{n_x\times n_y}\) denotes the ES-KF gain matrix at time step \(k\), and $\bar T$ denotes the maximum element of the threshold vector $\mathbf{T}$,
then the \(\beta\)-intolerable stealthy attack as defined in Definition \ref{def: intolerance attack} necessarily leads to \(q_k \;\ge\; \eta\).
\end{lemma}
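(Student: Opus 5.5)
The plan is to bound the detection statistic \(q_k\) from below directly in terms of the attack magnitude \(\|\delta y_k\|_\infty \ge \beta\), and then compare this bound with \(\bar\eta(\beta)\). Substituting \eqref{eq: sensor a output} into \eqref{eq: state-residual} splits the residual into an attack-induced part and a nominal part:
\[
r_k = K_k(y^*_k-\hat y_k) + K_k\,\delta y_k .
\]
The nominal part \(r^0_k := K_k(y^*_k-\hat y_k)\) is the residual that would be produced without an attack. It belongs to the normal-operation set \(\Gamma_k\), so \(|r^0_{k,i}|\le \mathbf{T}_i\le \bar T\) for every \(i\). The natural working assumption, consistent with the paper's intended reading, is that the detector is tuned so that attack-free residuals stay inside \(\Gamma_k\).

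First, lower-bound the attack-induced part. I would use the identification \(\|K_k\delta y_k\|_\infty \ge \|K_k\|_\infty\|\delta y_k\|_\infty\), which holds if \(\delta y_k\) is aligned with the row of \(K_k\) achieving the induced norm. In the worst case one only has \(\|K_k\delta y_k\|_\infty\le\|K_k\|_\infty\|\delta y_k\|_\infty\), so this direction is the main obstacle. I would handle it by either (a) restricting to the intolerable attacks whose effect on the residual is at least \(\|K_k\|_\infty\beta\), or (b) interpreting \(\beta\)-intolerability as a bound on \(K_k\delta y_k\) in the \(\|K_k\|_\infty\) scale. The cleanest version is to make the hypothesis \(\|K_k\delta y_k\|_\infty\ge \|K_k\|_\infty\beta\) explicit and note that it is what \(\|\delta y\|_\infty\ge\beta\) means through the gain.

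Second, apply the reverse triangle inequality componentwise. Let \(i^*\) be the index attaining \(\|K_k\delta y_k\|_\infty\). Then
\[
|r_{k,i^*}| \ge |(K_k\delta y_k)_{i^*}| - |r^0_{k,i^*}| \ge \|K_k\|_\infty\beta - \bar T .
\]
Dividing by \(\mathbf{T}_{i^*}\le \bar T\) and using \eqref{eq: detector} gives
\[
q_k \ge \frac{|r_{k,i^*}|}{\mathbf{T}_{i^*}} \ge \frac{\|K_k\|_\infty\beta-\bar T}{\bar T} = \bar\eta(\beta) \ge \eta .
\]
This is the claim. The division step is only valid when the numerator is nonnegative. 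If \(\bar\eta(\beta)<0\), the conclusion \(q_k\ge\eta\) cannot follow because \(\eta\in(0,1)\) is positive, so no admissible \(\eta\) satisfies \eqref{hp}. The statement is therefore vacuous in that case, and I would note this separately.
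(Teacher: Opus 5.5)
Your route is the same as the paper's proof. You split $r_k=r_k^{(0)}+K_k\delta y_k$, bound $\|r_k^{(0)}\|_\infty\le\bar T$ via the attack-free bound $q_k^{(0)}\le 1$, apply the reverse triangle inequality, and divide by $\bar T$. The paper works with $\|r_k\|_\infty/\bar T$ rather than a single index $i^*$, which is immaterial.

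The step you flagged is where the argument breaks, and the paper's proof does not get around it either. It simply asserts $\|K_k\delta y_k\|_\infty\ge\|K_k\|_\infty\beta$ from $\|\delta y_k\|_\infty\ge\beta$, which is the reverse of the induced-norm inequality. This is a genuine gap, not a matter of interpretation, because the statement as written is false. Take $n_x=1$, $n_y=2$, $K_k=[1\ \ -1]$ (so $\|K_k\|_\infty=2$), $\mathbf{T}=\bar T$, $\beta=\bar T$, $\delta y_k=(\beta,\beta)^\top$ and $y_k^*=\hat y_k$. Then $K_k\delta y_k=0$, so $r_k=0\in\Gamma_k$, and this is a legitimate $\beta$-intolerable stealthy attack. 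Yet $q_k=0$, while $\bar\eta(\beta)=1$ admits every $\eta\in(0,1)$. Any $\delta y_k\in\ker K_k$ does this, and such vectors exist whenever $n_y>n_x$. For invertible $K_k$ the claimed bound still fails along low-gain directions of $K_k$.

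Your option (b), declaring that $\|K_k\delta y_k\|_\infty\ge\|K_k\|_\infty\beta$ is ``what $\|\delta y\|_\infty\ge\beta$ means through the gain'', is therefore not a derivation. Definition~\ref{def: intolerance attack} constrains $\delta y$ itself, so you would be proving a different lemma. There are two honest repairs:
\begin{itemize}
\item State your option (a) as an explicit extra hypothesis.
\item Replace $\|K_k\|_\infty$ by a lower gain. If $K_k$ has full column rank and $L_k$ is any left inverse ($L_kK_k=I$), then $\|\delta y_k\|_\infty=\|L_kK_k\delta y_k\|_\infty\le\|L_k\|_\infty\|K_k\delta y_k\|_\infty$, so $\|K_k\delta y_k\|_\infty\ge\beta/\|L_k\|_\infty$. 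The rest of your argument then goes through unchanged with $\bar\eta(\beta)=(\beta/\|L_k\|_\infty-\bar T)/\bar T$.
\end{itemize}

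Two smaller remarks. First, as you noted, $\|r_k^{(0)}\|_\infty\le\bar T$ is an assumption in both your proof and the paper's, not a consequence of the setup, and should be stated as a hypothesis. Second, your worry about the sign of the numerator is unnecessary. Since $|r_{k,i^*}|\ge 0$, the inequality $|r_{k,i^*}|/\mathbf{T}_{i^*}\ge|r_{k,i^*}|/\bar T$ holds whatever the sign of the lower bound on $|r_{k,i^*}|$. The vacuity observation for $\bar\eta(\beta)\le 0$ is correct but not needed for validity.
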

The proof of Lemma \ref{lem:delta-y-lower-bound} is given in Appendix \ref{appendix: lemma delta-y}.
Lemma \ref{lem:delta-y-lower-bound} guarantees the existence of a suspect threshold such that the initial occurrence of \(\beta\)-intolerable stealthy attack can be identified. 

\begin{figure}
    \centering
    \includegraphics[width=1\linewidth]{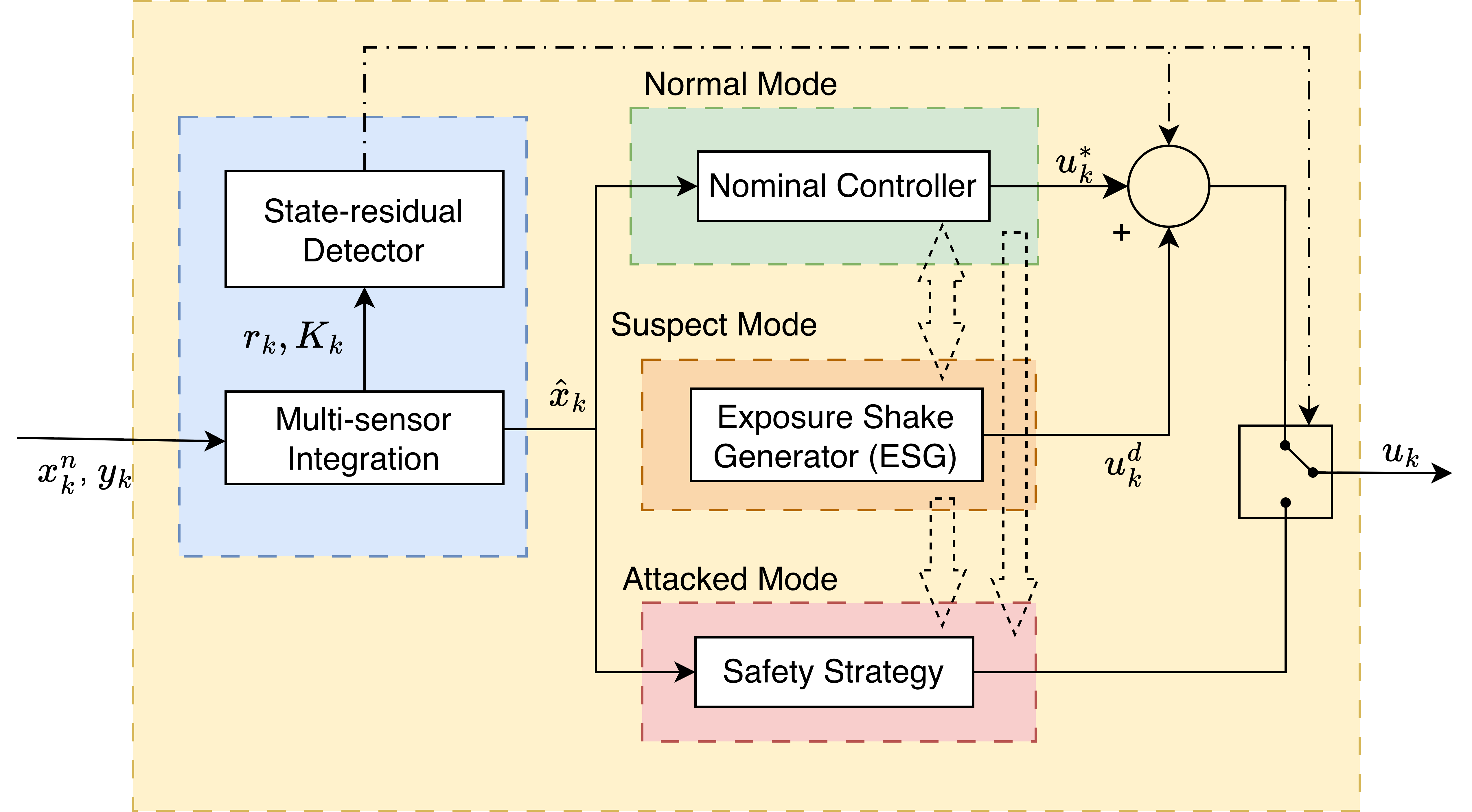}
    \caption{An exposure framework for stealthy deception attack.}
    \label{fig: framwork}
\end{figure}

\textit{Normal Mode $\&$ Attacked Mode} In normal mode, the system continuously performs detection and uses a nominal controller. If the attack is detected in normal mode or exposed in suspect mode, the safety strategy is activated. 
Safe strategies are not the focus of this paper.
In practice, appropriate strategies can be selected based on the sensor type and the operating environment (\cite{wan2020safety, bachrach2011range}).

\textit{Suspect Mode.} This part is our main contribution. 
ESG is designed to generate the exposure shake sequence that guarantees both finite-time exposure and tolerable state deviation.
The exposure shake is injected into the nominal control input to form a composite control input until the attack is revealed.

We assume that the tracking error 
\(
e_k := x_k - x_k^{goal}
\)
evolves according to the following closed-loop dynamics:
\begin{equation}
    e_{k+1} = A_{\mathrm{cl},k} \, e_k + d_k,
    \label{eq:cl-dev}
\end{equation}
where $A_{\mathrm{cl},k} \in \mathbb{R}^{n_x\times n_x}$ denotes the induced state–transition matrix characterizing the error dynamics, and $d_k$ represents external disturbances affecting the error dynamics.
The closed-loop characteristic induced by the nominal controller is discussed below.

\begin{assumption}\label{ass:controller-define}
Consider the dtLTS as in \eqref{eq: system model 1}, and the closed-loop dynamics as in \eqref{eq:cl-dev}.
Define the state–transition matrix from time step $l$ to $k$ as \[\Phi(k,\ell) := 
    A_{\mathrm{cl},k-1}\cdots A_{\mathrm{cl},\ell},k\ge \ell\ge 0.\]
We assume that \eqref{eq:cl-dev} is \emph{uniformly exponentially stable} (UES), 
i.e., there exist constants $c>0$ and $\rho\in (0,1)$ such that \(\|\Phi(k,\ell)\| \le c\,\rho^{\,k-\ell}, \forall\, k\ge \ell\ge 0.\)
\end{assumption}

Assumption~\ref{ass:controller-define} can be satisfied by a range of tracking controllers that ensure exponential convergence of the tracking error (\cite{xiao2017general, miller2020adaptive, chen2019reinforcement}).
In the next subsection, we present the design of exposure shake, followed by the compensation condition.

\subsection{Exposure Shake Generation}
After the system switches to suspect mode at time step \(k^a \in \mathbb{N}\), the exposure shake is injected into the nominal control input to identify the existence of a \(\beta\)-intolerable stealthy attack.
The composite control input is
\begin{equation}\label{eq: control decompose}
    u_k= u^*_k+u^d_k, \; \forall k \in [k^a, k^a+k^{\exp}],
\end{equation}
where $u^*_k \in \mathbb{R}^{n_u}$ denotes the nominal control input provided by the nominal controller, $u^d_k \in \mathbb{R}^{n_u}$ denotes the exposure shake, and \(k^{\exp} \in \mathbb{N}^+\) denotes the exposure horizon.

Intuitively speaking, the randomly generated exposure shakes enlarge the discrepancy between the defender’s true estimates as in \eqref{eq: estimate} and the attacker’s estimates as in \eqref{eq: attacker dynamics}. 
Consequently, the attack conducted based on incorrect estimates can not remain stealthy. 
Fig. \ref{fig: dots} illustrates the attack‐exposure process, and a dynamic demonstration is available online.\footnote{\url{https://github.com/meiqitian01/Lure-and-Reveal}}

\begin{figure}[!t]
    \centering
    \includegraphics[width=1\linewidth]{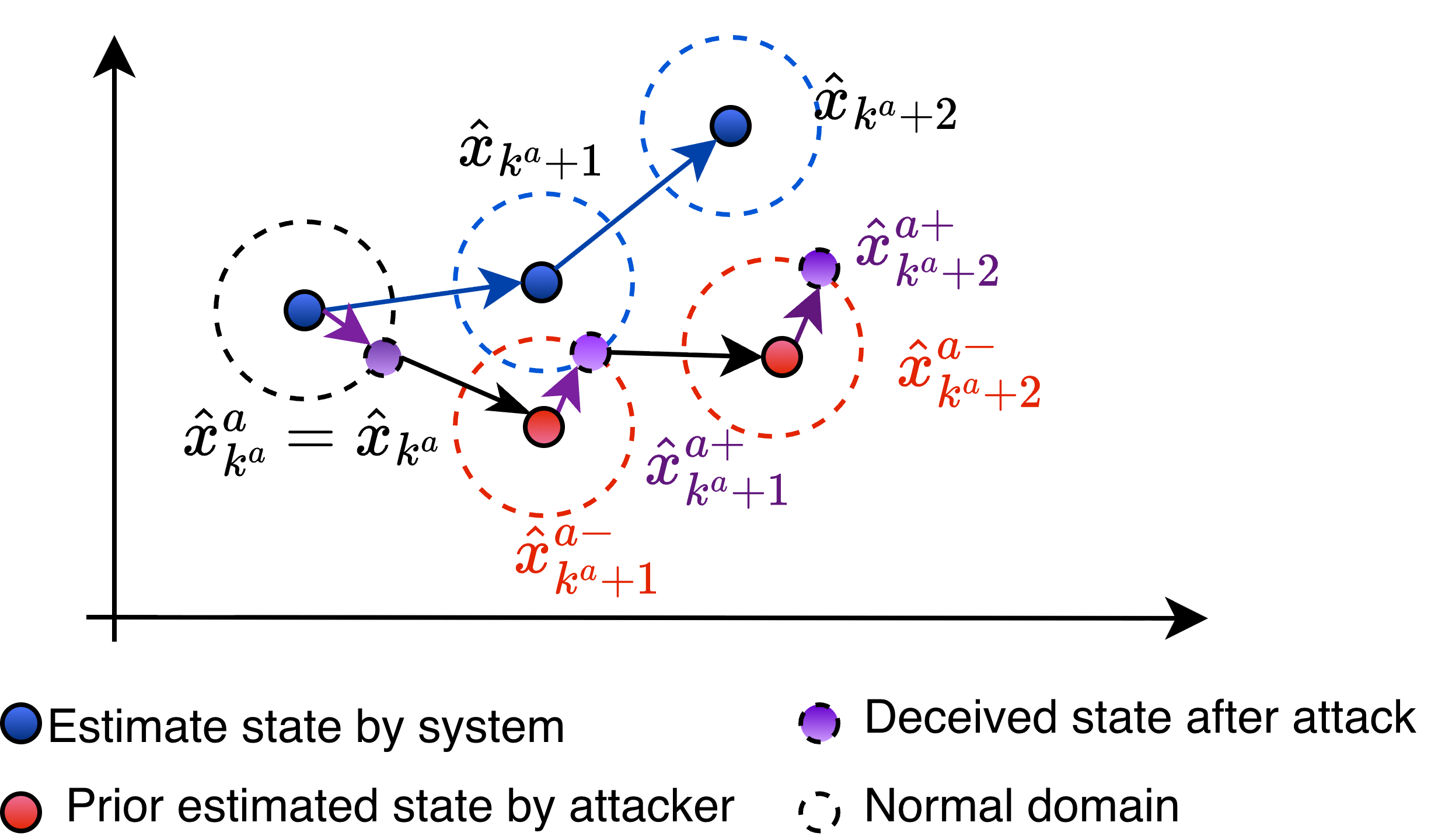}
    \caption{Schematic diagram of the attack exposure process. 
    In the attacker’s anticipation, the estimated states (red dots) evolve along the black arrows and then spoofed to deceived states (purple dots). 
    However, the defender’s estimated states evolve along the blue arrows during exposure process. 
    At time $k^a+2$, any deceived state generated based on $\hat x^a_{k^a+2}$ will exceed the normal domain of $\hat x_{k^a+2}$, triggering attack alarm.}
    \label{fig: dots}
\end{figure}

During the exposure horizon, the system only uses the reliable sensor, i.e., \(\hat x_k = x^n_k\). 
Based on the attacker’s estimate $\hat x^{a}_k$ and the defender’s true estimate $\hat x_k$, an exposure constraint corresponding to the state-residual detector can be expressed as
\begin{equation}\label{eq: exposure constraint}
\|\hat x_{k^a+ k^{\exp}}-\hat x^{a}_{k^a+k^{\exp}}\|_{\infty} -\|2\mathbf{T}\|_{\infty}\geq 0,
\end{equation}
where \(\mathbf{T} \in \mathbb{R}^{n_x}\) is the threshold vector.
This constraint implies that ESG should reveal the stealthy attack by the state-residual detector within an exposure horizon \(k^{\exp}\). 
Intuitively, if the maximum difference between the defender’s true estimate and the attacker’s prior estimate exceeds twice the threshold, then any detection statistic based on the deceived state will exceed the threshold.  

To generate the exposure shake sequence that satisfies the exposure constraint, we propose the following optimization formulation.

\begin{definition}\label{def: opt}
Consider the exposure constraint \eqref{eq: exposure constraint}, and the attacker with capabilities as described in Assumption \ref{asm:attacker}.
Define the following optimization to compute the randomly generated exposure shakes:
    \begin{align}
    \min_{u^d} &\max_{\Delta x^a_{i}\in \Gamma_{i}} J=\sum^{k^a+N}_{i=k^a} ({\tilde x_i}^{\top} Q\tilde x_i + \tilde u_i^\top R \tilde u_i) \label{eq: MPC constraint1}\\
    \text{s.t. } & \hat x^{a+}_{i+1} = A\hat x^{a+}_i + Bu^{a}_i+ w_i^a +{\Delta x^{a}_{i}}\label{eq: MPC constraint2}\\
    & \tilde u_i = u^*_i+u^d_i \label{eq: MPC constraint3}\\
    &\hat x_{i+1} = A \hat x_i + B\tilde u_i+w_i \label{eq: MPC constraint4}\\
    &\|\hat x_{k^a+ k^{\exp}}-\hat x^{a}_{k^a+k^{\exp}}\|_{\infty} -\|2\mathbf{T}\|_{\infty}\geq 0\label{eq: MPC constraint5}\\
    & \|u^{d}_i+\epsilon_i\|\leq \bar u \label{eq: MPC constraint6}
\end{align}
where \(u^*_i \in \mathbb{R}^{n_u}\) and \(u^a_i \in \mathbb{R}^{n_u}\) denote the nominal control inputs of the defender and the attacker generated by the same control law, $N \in \mathbb{N}^+, N > k^{\exp}$ denotes the prediction horizon, $\tilde x_i \in \mathbb{R}^{n_x}$ denotes the difference between the estimated state and the goal state at time step $i$, i.e., $\tilde x_i = x_i-x^{goal}_i$, $Q\in \mathbb{R}^{n_x \times n_x}$ and $R\in \mathbb{R}^{n_u \times n_u}$ denote the symmetric positive definite weight matrices, $\hat x^a_{i} \in \mathbb{R}^{n_x}$ denotes the estimated state by attacker, \(w^a_k \in \mathbb{R}^{n_x}\) denotes the process noise estimated by attacker with \({w^a_k}^\top w^a_k\leq \bar w\), $\epsilon_i \in \mathbb{R}^{n_u}$ denotes a random vector with positive components, and \(\bar u \in \mathbb{R}^+\) denotes the bound of shake magnitude.
\end{definition}

The random vector \(\epsilon_i\) in \eqref{eq: MPC constraint6} is introduced to prevent attackers from inferring the exact value of $u^d$. 
Even if attackers are aware of the suspect mode, they cannot access the true shake values and thus cannot evade exposure.

In the following, the feasibility of the optimization \eqref{eq: MPC constraint1}--\eqref{eq: MPC constraint6} throughout the prediction horizon is established.

\begin{theorem}\label{them: u_min}
Consider the dtLTS as in \eqref{eq: system model 1}, the optimization described in Definition \ref{def: opt}, and the controller that allows Assumptions~\ref{ass:controller-define}to hold. 
Let \(k^a \in \mathbb{N}^+\) denote the time step when the system enters suspect mode. 
Let \(\mathbf{T}=[T_1, T_2, \cdots, T_{n_x}]^\top \in \mathbb{R}^{n_x}\) denote the threshold of the state-residual detector. 
Given a finite exposure horizon \(k^{\exp} \in \mathbb{N}^+\), the optimization \eqref{eq: MPC constraint1}--\eqref{eq: MPC constraint6} has a feasible solution at \(k^a\), if \(\bar u\) in equation \eqref{eq: MPC constraint6} satisfies
\begin{equation}\label{eq: u_min}
        \bar{u} \geq \bar u_{min}:=\frac{1}{\|B\|}(\frac{2\bar T(1-\rho)}{c(1-\rho^{k^{\exp}})}+2\bar w+\bar T),
    \end{equation}
where \(\bar T\) denotes the maximum component of \(\mathbf{T}\), \( c\) and \( \rho\) are the UES constants for \(A_{\mathrm{cl},k}\) in \ref{eq:cl-dev}, \(B\) and \(\bar w\) denote the input matrix and the bound of process noise as in \eqref{eq: system model 1}.
\end{theorem}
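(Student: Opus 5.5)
Feasibility here means exhibiting one admissible shake sequence $\{u^d_i\}$ that satisfies the exposure constraint \eqref{eq: MPC constraint5} together with the magnitude bound \eqref{eq: MPC constraint6}. The remaining constraints \eqref{eq: MPC constraint2}--\eqref{eq: MPC constraint4} only define the trajectories, and the min–max cost is finite for any fixed input over the finite horizon $N$. So the plan is a sufficiency argument. We write the discrepancy $\delta_k := \hat x_k - \hat x^a_k$ and propagate it from $k^a$ to $k^a+k^{\exp}$. We then show that a constant-direction shake of magnitude $\bar u\ge\bar u_{min}$ pushes $\|\delta_{k^a+k^{\exp}}\|_\infty$ above $\|2\mathbf{T}\|_\infty = 2\bar T$ for every admissible attacker choice of $w^a_i$ and $\Delta x^a_i\in\Gamma_i$.

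\textbf{Step 1 (discrepancy dynamics).} Subtract \eqref{eq: MPC constraint2} from \eqref{eq: MPC constraint4}. Since $u^*_i$ and $u^a_i$ come from the same control law, their difference is a feedback of the estimate discrepancy. Following Assumption~\ref{ass:controller-define}, this lets us write
\[
\delta_{i+1} = A_{\mathrm{cl},i}\,\delta_i + B u^d_i + (w_i - w^a_i) - \Delta x^a_i .
\]
Unrolling from $k^a$ gives
\[
\delta_{k^a+k^{\exp}} = \Phi(k^a+k^{\exp},k^a)\delta_{k^a} + \sum_{j} \Phi(k^a+k^{\exp},j+1)\bigl(Bu^d_j + w_j - w^a_j - \Delta x^a_j\bigr).
\]
At $k^a$ the attacker still mirrors the defender, so we may take $\delta_{k^a}$ bounded by $\bar T$, or zero.

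\textbf{Step 2 (lower bound on the shake contribution).} Choose $u^d_j$ aligned with a fixed direction so that $\|Bu^d_j\|$ is of order $\|B\|\bar u$. The random term $\epsilon_j$ only perturbs within the ball of \eqref{eq: MPC constraint6}, so we can select $u^d_j=\bar u v-\epsilon_j$ with a unit vector $v$. The disturbance terms are then bounded crudely: $\|w_j-w^a_j\|\le 2\bar w$ and $\|\Delta x^a_j\|\lesssim \bar T$. These bounds produce exactly the $2\bar w+\bar T$ terms in \eqref{eq: u_min}. The net per-step injected drive is therefore at least $\|B\|\bar u-2\bar w-\bar T$, which by \eqref{eq: u_min} is at least $\frac{2\bar T(1-\rho)}{c(1-\rho^{k^{\exp}})}$. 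Accumulating this drive with the geometric factor $\sum_{j=0}^{k^{\exp}-1} c\rho^{j} = c\frac{1-\rho^{k^{\exp}}}{1-\rho}$ from Assumption~\ref{ass:controller-define} gives a total of at least $2\bar T$. Converting between $\|\cdot\|$ and $\|\cdot\|_\infty$ costs at most a harmless constant, which we absorb or treat as in the paper's convention. This yields \eqref{eq: MPC constraint5}.

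\textbf{Main obstacle.} The difficulty is Step 2's direction. UES gives only an upper bound $\|\Phi\|\le c\rho^{k-\ell}$, yet we need a lower bound on the accumulated shake response. We must therefore argue that the defender can choose the shake directions, for example by stacking $u^d_j$ so that $\sum_j \Phi(\cdot,j+1)Bu^d_j$ attains a norm matching the geometric sum. One route is to interpret $c,\rho$ as tight growth constants along the chosen direction. Another is to choose $u^d_j$ sequentially so that each increment is aligned with the current $\delta$, which makes the worst-case attacker terms subtractive rather than cancelling. I would first try the sequential alignment argument and, if needed, state the bound with $c\rho^{j}$ treated as attained along the excited direction, as the form of \eqref{eq: u_min} suggests.
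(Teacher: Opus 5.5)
Your outline follows the paper's own proof almost step for step: discrepancy dynamics, unrolling with $\Phi$, the UES geometric factor, then solving for $\bar u$. The ``main obstacle'' you flag is exactly where both arguments fail. The paper passes from \eqref{A.9} to \eqref{A.10} by inserting the \emph{upper} bound \eqref{A.7} on $|S_{\mathrm{ctrl},i}|$ into a requirement that needs a \emph{lower} bound. Your sentence ``accumulating this drive with the geometric factor gives a total of at least $2\bar T$'' makes the same move. Assumption~\ref{ass:controller-define} only gives $\|\Phi(k,\ell)\|\le c\rho^{k-\ell}$. Nothing bounds $\bigl|\bigl(\sum_j\Phi(k^a+k^{\exp},j+1)Bu^d_j\bigr)_i\bigr|$ from below, so this is a genuine gap.

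Neither of your repairs closes it.
\begin{itemize}
\item Sequential alignment with the current $\delta_j$ is not implementable. $\delta_j$ contains the attacker's internal estimate $\hat x^a_j$, which the defender never observes. Feasibility at $k^a$ concerns one sequence that must work for every admissible $\Delta x^a_i\in\Gamma_i$ and $w^a_i$.
\item Even if $\delta_j$ were known, $B$ is generally not surjective (in the UAV example $B\in\mathbb{R}^{6\times 3}$). Moreover, $\|A_{\mathrm{cl},j}\delta_j\|$ is bounded below only via $\sigma_{\min}(A_{\mathrm{cl},j})$, never via $c\rho$.
\item Declaring $c\rho^{j}$ ``attained along the excited direction'' is an extra hypothesis, and the statement genuinely needs one. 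The right-hand side of \eqref{eq: u_min} decreases as $c$ grows, while Assumption~\ref{ass:controller-define} stays true when $c$ is enlarged. So $\bar u_{\min}$ can be pushed toward $(\bar T+2\bar w)/\|B\|$.
\end{itemize}

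Here is a scalar counterexample. Take a deadbeat loop $A_{\mathrm{cl}}=0$, which is UES for every $c\ge1$ and $\rho\in(0,1)$. Then $\delta_{k^a+k^{\exp}}=Bu^d_{k^a+k^{\exp}-1}+\Delta w-\Delta x^a$. Already for the realization without attack or noise (which a worst-case constraint must cover), \eqref{eq: MPC constraint5} requires $|B|\,|u^d_{k^a+k^{\exp}-1}|\ge 2\bar T$. This fails for $\bar u$ near $(\bar T+2\bar w)/|B|$ whenever $2\bar w<\bar T$, up to the small $\epsilon_i$; the simulation values $\bar w=0.01$, $\bar T=0.05$ qualify.

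A large $c$ need not be a loose choice either. For $A_{\mathrm{cl}}=\begin{pmatrix}0&M\\0&0\end{pmatrix}$ and $B=e_1$, every admissible $c$ exceeds $M$, yet $A_{\mathrm{cl}}B=0$. Exposure again needs $|u^d|\ge 2\bar T$, while $\bar u_{\min}<2\bar T/M+2\bar w+\bar T$.

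What your Step 2 actually needs is an excitability (lower-gain) hypothesis. For example: a unit input direction $v$, a coordinate $i$, and a sign pattern known in advance such that $\bigl|(\Phi(k^a+k^{\exp},j+1)Bv)_i\bigr|\ge \underline c\,\underline\rho^{\,k^a+k^{\exp}-j-1}\|B\|$. Then the open-loop choice $u^d_j=\pm\bar u v-\epsilon_j$ adds coherently in coordinate $i$. Your argument then gives $\|B\|\bar u\ge\bigl(2\bar T+G(2\bar w+\bar T)\bigr)/\underline G$, with $G=c\frac{1-\rho^{k^{\exp}}}{1-\rho}$ and $\underline G=\underline c\frac{1-\underline\rho^{k^{\exp}}}{1-\underline\rho}$. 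This reduces to \eqref{eq: u_min} only when $\underline G=G$, as in the tight scalar case.

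You would also have to fix three bookkeeping points.
\begin{itemize}
\item Keep the initial term $\Phi(k^a+k^{\exp},k^a)\delta_{k^a}$. It can subtract up to $c\rho^{k^{\exp}}\|\delta_{k^a}\|$, so ``bounded by $\bar T$, or zero'' does not dispose of it; the paper drops it silently.
\item The noise bound is off: $w^\top w\le\bar w$ gives $\|w-w^a\|\le 2\sqrt{\bar w}$, not $2\bar w$.
\item A vector $\Delta x^a\in\Gamma_i$ can have Euclidean norm up to $\|\mathbf{T}\|\le\sqrt{n_x}\,\bar T$. So the ``harmless constant'' from passing between $\|\cdot\|$ and $\|\cdot\|_\infty$ actually changes the formula and cannot simply be absorbed.
\end{itemize}
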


Theorem \ref{them: u_min} guarantees the existence of a solution at time step \(k^a\). Next, we will discuss the recursive feasibility of the optimization \eqref{eq: MPC constraint1}--\eqref{eq: MPC constraint6} in \([k^a+1, k^a+N]\).

\begin{lemma}\label{lem:recursive feasibility}
Consider the optimization described in Definition \ref{def: opt}. If there exists a feasible solution at \(k^a\), then there exists a feasible solution for any time step within the prediction horizon \(N\in \mathbb{N}^+, N > k^{\exp}\).
\end{lemma}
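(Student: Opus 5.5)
The argument is the standard shifted-sequence one for receding-horizon schemes, adapted to the two features of the optimization in Definition~\ref{def: opt}: the terminal exposure constraint \eqref{eq: MPC constraint5} and the magnitude bound \eqref{eq: MPC constraint6}. Suppose at time $k^a$ we have a feasible shake sequence $\{u^{d,*}_{k^a},\dots,u^{d,*}_{k^a+N}\}$ with associated realizations $\epsilon^*_i$. At time $k^a+1$ the candidate is the tail $\{u^{d,*}_{k^a+1},\dots,u^{d,*}_{k^a+N}\}$, followed by one appended element. We then check \eqref{eq: MPC constraint2}--\eqref{eq: MPC constraint6} one at a time and proceed by induction on the time step inside $[k^a,k^a+N]$.

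\textbf{Dynamics constraints.} Constraints \eqref{eq: MPC constraint2}--\eqref{eq: MPC constraint4} are equality constraints that only define the predicted trajectories $\hat x_i$ and $\hat x^{a+}_i$ from the chosen inputs. They are therefore satisfied by construction for any input sequence. Since the true $\hat x_{k^a+1}$ is produced by the same model \eqref{eq: MPC constraint4} with an admissible noise $w_{k^a}$, the trajectory generated by the shifted sequence coincides with the tail of the previously certified one, up to that admissible noise realization. The inner maximization over $\Delta x^a_i\in\Gamma_i$ and $w^a_i$ ranges over the same sets at every step. Hence a constraint that held for all attacker choices at $k^a$ still holds for the remaining stages.

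\textbf{Magnitude bound.} For \eqref{eq: MPC constraint6}, every retained element already satisfies $\|u^d_i+\epsilon_i\|\le\bar u$. For the appended element at the end of the horizon we take $u^d=0$, so that we only need $\|\epsilon\|\le\bar u$. This holds whenever the random vector is drawn from a support bounded by $\bar u$, which is implicit in the feasibility of the first step.

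\textbf{Main obstacle: the terminal exposure constraint \eqref{eq: MPC constraint5}.} This constraint refers to the fixed time index $k^a+k^{\exp}$, not to a shifted one. There are two cases.
\begin{itemize}
\item While the current step is $\le k^a+k^{\exp}$, the target time is still within the shifted horizon. The shifted sequence drives the defender/attacker discrepancy to the same value as before, and the constraint holds.
\item Once the current step exceeds $k^a+k^{\exp}$, the constraint concerns a past time. Its value was already fixed when exposure was achieved, so it is trivially satisfied, or alarm-triggered and moot.
\end{itemize}
The first thing to try for robustness against noise is to invoke the margin in Theorem~\ref{them: u_min}. The term $2\bar w$ in $\bar u_{\min}$ exactly absorbs the noise mismatch between the predicted and realized $\hat x$, so re-solving from the realized state preserves $\|\hat x-\hat x^a\|_\infty\ge\|2\mathbf T\|_\infty$. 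If this margin argument turned out to be too loose, the fallback would be to re-apply Theorem~\ref{them: u_min} directly at each step, with the horizon reduced to $k^{\exp}-(k-k^a)$. Concluding by induction then gives feasibility at every step in $[k^a+1,k^a+N]$.
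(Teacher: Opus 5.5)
Your proposal follows essentially the same route as the paper. You shift the feasible shake sequence one step forward, append an admissible final element, and note that the exposure constraint at the fixed time $k^a+k^{\exp}$ is inherited from the original sequence, so feasibility persists at every step before exposure. The paper does not address the mismatch between predicted and realized noise at all, so your robustness paragraph goes beyond it.

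Some of your side remarks need correcting, though the shift argument itself does not rely on them:
\begin{itemize}
\item Feasibility at $k^a$ does not imply $\|\epsilon\|\le\bar u$, so append $u^d=-\epsilon$ rather than $u^d=0$.
\item Re-applying Theorem~\ref{them: u_min} with the shortened horizon $k^{\exp}-(k-k^a)$ fails, because $\bar u_{\min}$ grows as the horizon shrinks (the factor $1-\rho^{k^{\exp}}$ decreases).
\item Your margin route presupposes $\bar u\ge\bar u_{\min}$, which the lemma does not assume.
\end{itemize}
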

The proofs for Theorem \ref{them: u_min} and Lemma \ref{lem:recursive feasibility} are  given in Appendix \ref{appendix: theorem1} and Appendix \ref{appendix: Lemma 1}, respectively. 
These results jointly guarantee the finite-time exposure of stealthy deception attacks.

\subsection{Compensability of exposure shake}\label{sec: The compensability of exposure shake}
To evaluate the impact of the injected shakes on the system, we discuss the compensability of exposure shakes.
\begin{definition}\label{def: compensability}
Consider the optimization defined in Definition \ref{def: opt}.
Let \(\{u^d_{k^a},u^d_{k^a+1}, \cdots, u^d_{k^a+k^{\exp}-1}\}\) denote the exposure shake sequence provided by the optimization. 
This sequence is \emph{\(\varepsilon\)-compensable} if the asymptotic tracking error deviation satisfies \(\limsup_{t \to \infty}\|e_t\| \leq \varepsilon, \varepsilon> 0\).
\end{definition}
Next, we establish a sufficient condition under which the injected shake does not compromise the closed-loop behavior.

\begin{theorem}\label{them:compensability}
Consider the dtLTS as in \eqref{eq: system model 1}, the optimization described in Definition \ref{def: opt}, and the controller that allows Assumptions~\ref{ass:controller-define} to hold. 
Given a prescribed tolerance \(\varepsilon>0\) for the asymptotic tracking error, if the upper bound \( \bar{u} \) of the shake sequence in \eqref{eq: MPC constraint6} satisfies
\begin{equation}\label{eq: compensable condition}
    \bar{u} \leq \frac{(1 - \rho) \varepsilon - c \bar{w}}{c \| B \|},
\end{equation}
then the shake sequence is \emph{\(\varepsilon\)-compensable}.
\end{theorem}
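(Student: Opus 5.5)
The plan is to model the exposure shake as an additional additive disturbance in the closed-loop error dynamics \eqref{eq:cl-dev} and then use the uniform exponential stability of Assumption~\ref{ass:controller-define} to bound the forced response. Substituting the composite input \eqref{eq: control decompose} into \eqref{eq: system model 1}, and letting the nominal controller generate $A_{\mathrm{cl},k}$ through $u^*_k$, the error obeys
\begin{equation*}
e_{k+1} = A_{\mathrm{cl},k} e_k + B u^d_k + w_k,
\end{equation*}
where $u^d_k = 0$ outside $[k^a, k^a+k^{\exp}-1]$. So we identify $d_k = B u^d_k + w_k$. From constraint \eqref{eq: MPC constraint6} we extract a bound on the shake. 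Since $\epsilon_i$ is a random vector, I will take the intended reading to be $\|u^d_i\| \le \bar u$, possibly after absorbing $\epsilon_i$. Together with $\|w_k\| \le \sqrt{\bar w}$, and treating $\bar w$ as a norm bound in the same way as the paper's formula (a point to flag), this gives $\|d_k\| \le \|B\|\bar u + \bar w$ for all $k$.

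Next, unroll the recursion with the transition matrix $\Phi$:
\begin{equation*}
e_t = \Phi(t,k_0) e_{k_0} + \sum_{\ell=k_0}^{t-1} \Phi(t,\ell+1) d_\ell .
\end{equation*}
Apply $\|\Phi(k,\ell)\| \le c\rho^{k-\ell}$ term by term. The homogeneous part is bounded by $c\rho^{t-k_0}\|e_{k_0}\|$, which tends to $0$. The forced part is bounded by
\begin{equation*}
c(\|B\|\bar u + \bar w)\sum_{j\ge 0}\rho^j = \frac{c(\|B\|\bar u + \bar w)}{1-\rho}.
\end{equation*}
Taking $\limsup_{t\to\infty}$ gives $\limsup\|e_t\| \le c(\|B\|\bar u+\bar w)/(1-\rho)$. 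Requiring this to be at most $\varepsilon$ and rearranging yields exactly \eqref{eq: compensable condition}. That establishes $\varepsilon$-compensability in the sense of Definition~\ref{def: compensability}.

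I expect the main obstacle to be modelling rather than estimation. Specifically, I must justify that the shake enters \eqref{eq:cl-dev} purely additively through $B u^d_k$, with the closed-loop matrix $A_{\mathrm{cl},k}$ unchanged. This requires that the nominal controller does not react to $u^d$ in a way that alters $A_{\mathrm{cl},k}$, and that during the exposure horizon, where $\hat x_k = x^n_k$, the estimation error is negligible or absorbed into $w_k$. I would handle this by stating it as the standing interpretation of Assumption~\ref{ass:controller-define}, namely that UES holds for whatever disturbance enters additively.

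A secondary remark is that, because the shake is supported on a finite window, its contribution actually decays to zero. The limsup is therefore driven only by the noise term, so the stated condition is sufficient but conservative. I would still present the uniform bound above because it matches \eqref{eq: compensable condition} directly and also bounds $\|e_t\|$ uniformly during the exposure window.
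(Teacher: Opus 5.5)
Your proposal is correct and follows essentially the same route as the paper's proof. Both identify $d_k = Bu^d_k + w_k$ with $\|d_k\| \le \|B\|\bar u + \bar w$, unroll the error recursion via $\Phi$, apply the UES bound with a geometric series, take the $\limsup$, and rearrange. The caveats you flag are also made silently in the paper: treating $\bar w$ as a norm bound rather than a bound on $w_k^\top w_k$, and reading the shake constraint as $\|u^d_i\|\le\bar u$ despite $\epsilon_i$. Your homogeneous-term bound $c\rho^{t-k_0}\|e_{k_0}\|$ is in fact slightly more careful than the paper's, which drops the constant $c$.
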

The proof of Theorem~\ref{them:compensability} is given in Appendix \ref{appendix: Theorem 2}.
By Theorem~\ref{them:compensability}, any shake sequence whose magnitude satisfies the compensability condition~\eqref{eq: compensable condition} ensures that the closed-loop deviation from the ideal trajectory remains uniformly bounded by the prescribed tolerance~$\varepsilon$, despite the presence of bounded disturbances.

The requirements of Theorem \ref{them: u_min} and Theorem \ref{them:compensability} are compatible only if the tolerance \(\varepsilon\) is not chosen too small. 
The following lemma characterizes the minimal deviation tolerance that allows a nonempty feasibility interval for the shake magnitude.
\begin{lemma}\label{lem: 1}
Let $\bar u_{\min}$ be the exposure lower bound given by Theorem \ref{them: u_min}, and let
$\bar u_{\max}(\varepsilon)$ be the compensability upper bound given by Theorem \ref{them:compensability}. 
Define
\[
\varepsilon_{\min} := \frac{c(\bar T+3\bar w)}{1-\rho}+\frac{2\bar T}{1-\rho^{k^{\exp}}}.
\]
Then, for any $\varepsilon \ge \varepsilon_{\min}$, \([\bar u_{\min}, \bar u_{\max}(\varepsilon)]\) is nonempty.
\end{lemma}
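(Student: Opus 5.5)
The claim reduces to a single scalar inequality, so the plan is to compare the two explicit bounds directly. By Theorem~\ref{them: u_min}, $\bar u_{\min} = \frac{1}{\|B\|}\bigl(\frac{2\bar T(1-\rho)}{c(1-\rho^{k^{\exp}})}+2\bar w+\bar T\bigr)$. By Theorem~\ref{them:compensability}, $\bar u_{\max}(\varepsilon)=\frac{(1-\rho)\varepsilon-c\bar w}{c\|B\|}$. The interval $[\bar u_{\min},\bar u_{\max}(\varepsilon)]$ is nonempty if and only if $\bar u_{\min}\le \bar u_{\max}(\varepsilon)$, so I will show that this inequality is equivalent to $\varepsilon\ge\varepsilon_{\min}$.

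First I multiply both sides by $c\|B\|>0$. Here $\|B\|>0$ is implicitly assumed, since otherwise $\bar u_{\min}$ is undefined, and $c>0$ by Assumption~\ref{ass:controller-define}. This gives
\[
\frac{2\bar T(1-\rho)}{1-\rho^{k^{\exp}}}+2c\bar w+c\bar T \le (1-\rho)\varepsilon - c\bar w .
\]
Moving $c\bar w$ to the left-hand side turns this into
\[
(1-\rho)\varepsilon \ge c(\bar T+3\bar w)+\frac{2\bar T(1-\rho)}{1-\rho^{k^{\exp}}}.
\]
Next I divide by $1-\rho>0$, using $\rho\in(0,1)$. The result is exactly $\varepsilon\ge \frac{c(\bar T+3\bar w)}{1-\rho}+\frac{2\bar T}{1-\rho^{k^{\exp}}}=\varepsilon_{\min}$. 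The denominator $1-\rho^{k^{\exp}}$ is positive because $k^{\exp}\ge 1$.

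Each step is an equivalence obtained by multiplying or dividing by a positive constant. Hence for every $\varepsilon\ge\varepsilon_{\min}$ we get $\bar u_{\min}\le\bar u_{\max}(\varepsilon)$, and so the interval is nonempty. It contains at least $\bar u_{\min}$, and it is a single point exactly when $\varepsilon=\varepsilon_{\min}$.

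There is no real obstacle here. The only care needed is checking the sign conditions ($c>0$, $\|B\|>0$, $0<\rho<1$, $k^{\exp}\ge1$) that justify each manipulation, and correctly collecting the $\bar w$ terms ($2c\bar w + c\bar w = 3c\bar w$).
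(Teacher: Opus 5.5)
Your proof is correct. The lemma is exactly the scalar inequality $\bar u_{\min}\le\bar u_{\max}(\varepsilon)$, and clearing the positive factors $c\|B\|$ and $1-\rho$ and collecting $2c\bar w+c\bar w=3c\bar w$ turns it into $\varepsilon\ge\varepsilon_{\min}$. The paper gives no appendix proof for this lemma, and this direct comparison of the two closed-form bounds is the intended argument.
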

Lemma \ref{lem: 1} implies that there exists a shake magnitude $\bar u$ that 
both exposes the attack within $k^{\exp}$ steps and keeps the closed-loop  deviation below \(\varepsilon\) with \(\varepsilon \geq \varepsilon_{\min}\). 

Based on the discussion above, we are ready to propose our main results for solving Problem \ref{problem}.
\begin{theorem}
    Consider the dtLTS as in \eqref{eq: system model 1}, the \(\beta\)-intolerable stealthy attack as defined in Definition~\ref{def: intolerance attack}, and the attacker with capabilities as described in Assumption \ref{asm:attacker}.
    One can construct the \((\beta, K)\)-detector as defined in Definition \ref{def: detector}.
\end{theorem}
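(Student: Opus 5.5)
The plan is to build $\Theta$ explicitly from the three-mode rule of Definition~\ref{def: mode rule} and then check (C1) and (C2) of Definition~\ref{def: detector} using Lemma~\ref{lem:delta-y-lower-bound}, Theorem~\ref{them: u_min}, Lemma~\ref{lem:recursive feasibility}, and Theorems~\ref{them:compensability} and Lemma~\ref{lem: 1}. The design choices are as follows:
\begin{itemize}
\item Choose the suspect threshold $\eta \le \bar\eta(\beta)$ as in \eqref{hp}; this also requires $\eta\in(0,1)$, which holds when $\beta$ is large enough that $\|K_k\|_\infty\beta>\bar T$.
\item Choose the exposure horizon $k^{\exp}\le K$.
\item Choose a tolerance $\varepsilon\ge\varepsilon_{\min}$.
\item Choose a shake bound $\bar u\in[\bar u_{\min},\bar u_{\max}(\varepsilon)]$, which is nonempty by Lemma~\ref{lem: 1}.
\end{itemize}
The detector is then defined by $\Theta(q_k)=1$ if $q_k\ge 1$ and $\Theta(q_k)=0$ otherwise. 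In suspect mode, the shakes produced by the optimization of Definition~\ref{def: opt} with this $\bar u$ are injected into the control input.

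For (C1), fix any $\beta$-intolerable stealthy attack starting at $k^a$. By Lemma~\ref{lem:delta-y-lower-bound}, $q_{k^a}\ge\eta$, so the system is either already in attacked mode, and we are done, or it enters suspect mode at $k^a$. Since $\bar u\ge\bar u_{\min}$, Theorem~\ref{them: u_min} gives a feasible solution at $k^a$, and Lemma~\ref{lem:recursive feasibility} keeps the problem feasible over the whole horizon. The applied shakes therefore satisfy the exposure constraint \eqref{eq: MPC constraint5}, namely
\[
\|\hat x_{k^a+k^{\exp}}-\hat x^a_{k^a+k^{\exp}}\|_\infty\ge 2\bar T .
\]
Any deceived state $\hat x^{a+}=\hat x^a+\Delta x^a$ with $|\Delta x^a_i|\le T_i$ then gives a residual relative to the defender estimate with some component at least $2\bar T-\bar T\ge T_i$. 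Hence $q_{k'}\ge1$ at $k'=k^a+k^{\exp}\in\mathcal K$. This triangle-inequality step is where I expect the main obstacle. Making the residual $r_k=K_k(y_k-\hat y_k)$ coincide with the state discrepancy $\hat x^{a+}-\hat x$ requires interpreting the reliable-only estimate during exposure as the reference, and handling component-wise thresholds, using $\bar T$ as the worst case. I would first argue at the level of states, as the paper does in \eqref{eq: exposure constraint}. Because the max-norm constraint only guarantees a violation in one coordinate $i$, I would bound that coordinate by $2\bar T - T_i \ge T_i$ up to equality. If the attacker stays stealthy it must deviate, which yields $\Delta x^a$ outside $\Gamma$, a contradiction.

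For (C2), with no attack the suspicious measurement is the genuine $y^*_k$. The residual then only reflects noise, and the thresholds $\mathbf T$ are assumed calibrated so that $r_k\in\Gamma_k$ under nominal operation, which gives $q_k<1$ and $\Theta=0$. If a false suspect trigger occurs, the injected shakes obey $\bar u\le\bar u_{\max}(\varepsilon)$. Theorem~\ref{them:compensability} then keeps the tracking error bounded by $\varepsilon$, and with reliable-sensor estimation during exposure no residual exceeds the threshold. Combining the two parts yields the required $(\beta,K)$-detector.
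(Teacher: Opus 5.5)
Your skeleton is exactly how the paper means this theorem to follow; the paper gives no separate proof. The skeleton is: $\Theta(q_k)=1$ iff $q_k\ge 1$, Lemma~\ref{lem:delta-y-lower-bound} to enter suspect mode at $k^a$, Theorem~\ref{them: u_min} with Lemma~\ref{lem:recursive feasibility} for exposure, and $k^{\exp}\le K$. As written, however, neither (C1) nor (C2) is established.

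For (C1), the step you flag is a genuine gap, and the paper does not close it either; it only asserts it informally after \eqref{eq: exposure constraint}. Write $C$ for the measurement map, so that $\hat y_k=C\hat x_k$; the paper never introduces it. Suppose the attacker fabricates $y_k$ so that its own predicted residual $K_k(y_k-C\hat x^a_k)$ equals $\Delta x^a_k\in\Gamma_k$. Then the residual the defender actually computes is
\[
r_k=\Delta x^a_k+K_kC\,(\hat x^a_k-\hat x_k).
\]
Identifying $r_{k'}$ with $\hat x^{a+}_{k'}-\hat x_{k'}$ therefore amounts to assuming $K_{k'}C=I$ on the discrepancy. This fails generically, because $K_{k'}C$ has rank at most $n_y$ and is small whenever the filter trusts the reliable sensor. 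So a state gap of $2\bar T$ in $\|\cdot\|_\infty$ does not force $q_{k'}\ge 1$.

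What you need is the margin in residual space, $\|K_{k'}C(\hat x_{k'}-\hat x^a_{k'})\|_\infty\ge 2\bar T$. With it, your triangle inequality does give $|r_{k',i}|\ge 2\bar T-T_i\ge T_i$. But then constraint \eqref{eq: MPC constraint5} and the bound $\bar u_{\min}$ of Theorem~\ref{them: u_min} must be re-derived for that quantity.

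You must also state the attack model explicitly. If $\delta y_k$ in \eqref{eq: sensor a output} were added to the genuine $y^*_k$ independently of $\hat x^a_k$, the shakes would cancel in $r_k$ exactly as in the attack-free argument below, and no exposure could occur.

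The trigger step has a similar defect. Lemma~\ref{lem:delta-y-lower-bound} uses $\|K_k\delta y_k\|_\infty\ge\|K_k\|_\infty\beta$, which is false when $\delta y_k$ points along a weakly amplified direction of $K_k$. Your caveat should therefore read $\beta\min_{\|v\|_\infty=1}\|K_kv\|_\infty>\bar T$, uniformly in $k$, rather than $\|K_k\|_\infty\beta>\bar T$.

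For (C2), Theorem~\ref{them:compensability} bounds only $\limsup_{t\to\infty}\|e_t\|$, an asymptotic tracking error. It says nothing about $r_k$ on the finite window $\mathcal K$, so ``the shakes are compensable, hence no residual exceeds the threshold'' is a non sequitur. The correct argument is as follows. Without an attack, $u^d_k$ is known to the defender and enters the plant and the reliable-only prediction identically. Hence $x_{k+1}-\hat x_{k+1}=A(x_k-\hat x_k)+w_k$ (up to reliable-sensor noise), regardless of $u^d$, and the attack-free residual is unaffected by the shakes.

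(C2) then reduces to a calibration hypothesis, which must meet two requirements. First, it must be strict: $r_k\in\Gamma_k$ only gives $q_k\le 1$, while the alarm already fires at $q_k=1$. Second, it must cover the uncorrected reliable-only estimator over an exposure window, not merely nominal ES-KF operation. Compensability plays no role in either (C1) or (C2); it concerns performance only.
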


\section{Simulation}
We conduct simulations on a UAV using tightly-coupled GNSS/INS integration. GNSS is the suspicious sensor, and IMU is the reliable sensor.
\subsection{UAV model and Detectors}
A discrete-time 3D UAV navigation model with state \(x_k=[p_x,p_y,p_z,v_x,v_y,v_z]^\top\) and input \(u_k=[a_x,a_y,a_z]^\top\) is considered, with the sampling period \(dt=0.5\) s.
The system matrices are given by
\[
A = \begin{bmatrix}
    I_3 & dt I_3 \\
    0 & I_3\\
\end{bmatrix}, \quad 
B =  \begin{bmatrix}
\frac{1}{2}dt^2 I_3\\
dt I_3\\
\end{bmatrix}.
\]
The process noise bound $\bar w=0.01$.
A $H_\infty$ controller is designed with $Q = \mathrm{diag}(1,1,1,\allowbreak1,\allowbreak1,1)$, $R = 0.1 I$ and $\gamma = 2.5468$. The exponential stability constants are $c= 3.4157$ and $\rho = 0.5164$. 

The state-residual threshold vector is \(\mathbf{T}=[0.05, \allowbreak 0.05,\allowbreak 0.05,\allowbreak0.05, \allowbreak0.05, \allowbreak0.05]\).
The intolerance degree of attack is \(\beta=0.05\), and the corresponding maximum suspect threshold is \(\bar \eta(\beta)=0.48\).
The suspect threshold is set as \(\eta =0.2\). 
The exposure horizon is \(10\). 
The tolerance of state deviation is \(\varepsilon=0.4\). The bounds of exposure shake are \(\bar{u}_{min} = 0.08418\) and \(\bar{u}_{max} = 0.0905\). 
Additionally, the optimization \eqref{eq: MPC constraint1}--\eqref{eq: MPC constraint6} employs the weighting matrices \(Q = \mathrm{diag}(10,10,10,5,5,5)\) and \(R = \mathrm{diag}(1,1,1)\), and each component of \(\epsilon_i\) lies in \([0, 10^{-3}]\).

\begin{figure*}[t]  
    \centering  
    \includegraphics[width=0.9\textwidth]{figures/normal_attack.png}
    \caption{Preset states, true states and estimated states without attack, under Attack A, and under Attack B}  
    \label{fig: normal_attack} 
\end{figure*}

\begin{table*}[!t]  
  \centering
  \caption{Exposure shakes and detection results during the suspect mode}\label{tab: shakes}  
  \begin{tabular}{cccccc}
    \hline
    Attack Group & Shake Step & Nominal Control Input & Exposure Shake & Detection Statistics & Exposure \\
    \hline
    \multirow{3}{*}{Attack A}  
      & $1$ & $[0.1018,-0.0114,-0.1287]^\top$ & $[-0.0004,-0.0834,0.0006]^\top$ & $0.6355$ & -- \\
      & $2$ & $[0.2053,-0.0524,-0.5000]^\top$ & $[-0.0002,-0.0841,-0.0010]^\top$ & $1.7341$ & \Checkmark \\
    \hline
    \multirow{2}{*}{Attack B} 
      & $1$ & $[0.1018,-0.0114,-0.1287]^\top$ & $[0.0000,-0.0834,0.0002]^\top$ & $0.6824$ & -- \\
      & $2$ & $[0.3030,-0.1930,-0.5000]^\top$ & $[-0.0007,-0.0841,-0.0005]^\top$ & $2.9412$ & \Checkmark \\
    \hline
  \end{tabular}
\end{table*}
\subsection{Stealthy Deception Attack}
We consider two stealthy attack intensities, 0.6 and 0.9, where larger intensity corresponds to stronger measurement tampering while remaining undetected.
In both cases, the attack is launched within \(30–100\) s and causes a visible trajectory deviation without triggering alarms.
Fig.~\ref{fig: normal_attack}(a) shows the 2-D plane diagram of the UAV trajectories.
The changes in the estimated states are gentle and undetectable, as shown in Fig.~\ref{fig: normal_attack}(b). 
As for Attack A and Attack B, the maximum discrepancies in the x-coordinate are $1.641$ and $4.791$, respectively, which are far smaller than their impacts on the UAV's true states.

\subsection{Stealthy Attack Exposure}

The increase in the velocity error in x-position coordinate at $30$ s triggers the system to enter suspect mode.
The exposure shakes and detection results calculated until the attack is revealed are listed in Table~\ref{tab: shakes}. 
Revealing Attack A and Attack B both take \(2\) time steps, with the detection statistics being \(1.7341\) and \(2.9412\), respectively.

\section{Conclusion}
 In this paper, an exposure framework to reveal stealthy deception attacks in multi-sensor uncertain systems has been presented. 
The proposed method actively induces detectable inconsistencies in the attacker’s estimates and system's true estimates in a finite time. 
We also provide the formal derivation of a compensability condition that prevents the degradation of control performance.
Simulation results on a UAV navigation system under GNSS attack validated the effectiveness of the proposed method.
Future research directions include extending the proposed exposure framework to nonlinear systems.


\begin{thebibliography}{99}

\bibitem[Ahmed et~al.(2022)Ahmed, Palleti, and Mishra]{ahmed2022practical}
C.M. Ahmed, V.R. Palleti, and V.K. Mishra.
\newblock A practical physical watermarking approach to detect replay attacks in a CPS.
\newblock \emph{Journal of Process Control}, 116:\penalty0 136--146, 2022.

\bibitem[Athalye et~al.(2024)Athalye, Fotiadis, Vamvoudakis, and Hugues]{athalye2024output}
S. Athalye, F. Fotiadis, K.G. Vamvoudakis, and J. Hugues.
\newblock An output feedback game-theoretic approach for defense against stealthy GNSS spoofing attacks.
\newblock In \emph{2024 American Control Conference (ACC)}, pages 3704--3709. IEEE, 2024.

\bibitem[Bachrach et~al.(2011)Bachrach, Prentice, He, and Roy]{bachrach2011range}
A. Bachrach, S. Prentice, R. He, and N. Roy.
\newblock RANGE--robust autonomous navigation in GPS-denied environments.
\newblock \emph{Journal of Field Robotics}, 28(5):\penalty0 644--666, 2011.

\bibitem[Chen et~al.(2019)Chen, Modares, Xie, Lewis, Wan, and Xie]{chen2019reinforcement}
C. Chen, H. Modares, K. Xie, F.L. Lewis, Y. Wan, and S. Xie.
\newblock Reinforcement learning-based adaptive optimal exponential tracking control of linear systems with unknown dynamics.
\newblock \emph{IEEE Transactions on Automatic Control}, 64(11):\penalty0 4423--4438, 2019.

\bibitem[Geng et~al.(2024)Geng, Guo, Tang, Wu, Ren, and Duan]{geng2024covert}
X. Geng, Y. Guo, K. Tang, W. Wu, Y. Ren, and G. Duan.
\newblock A covert spoofing algorithm for SINS/GNSS tightly integrated navigation system.
\newblock \emph{IEEE Transactions on Automation Science and Engineering}, 2024.

\bibitem[Kim et~al.(2021)Kim, Eun, and Park]{kim2021stealthy}
S. Kim, Y. Eun, and K.J. Park.
\newblock Stealthy sensor attack detection and real-time performance recovery for resilient CPS.
\newblock \emph{IEEE Transactions on Industrial Informatics}, 17(11):\penalty0 7412--7422, 2021.

\bibitem[Miller and Shahab(2020)]{miller2020adaptive}
D.E. Miller and M.T. Shahab.
\newblock Adaptive tracking with exponential stability and convolution bounds using vigilant estimation.
\newblock \emph{Mathematics of Control, Signals, and Systems}, 32(3):\penalty0 241--291, 2020.

\bibitem[Mo et~al.(2015)Mo, Weerakkody, and Sinopoli]{mo2015physical}
Y. Mo, S. Weerakkody, and B. Sinopoli.
\newblock Physical authentication of control systems: Designing watermarked control inputs to detect counterfeit sensor outputs.
\newblock \emph{IEEE Control Systems Magazine}, 35(1):\penalty0 93--109, 2015.

\bibitem[Pang et~al.(2021)Pang, Fan, Sun, Liu, and Liu]{pang2021detection}
Z.H. Pang, L.Z. Fan, J. Sun, K. Liu, and G.P. Liu.
\newblock Detection of stealthy false data injection attacks against networked control systems via active data modification.
\newblock \emph{Information Sciences}, 546:\penalty0 192--205, 2021.

\bibitem[Prashar et~al.(2021)Prashar, Rashid, Siddiqui, Kumar, Nagpal, AlGhamdi, and Alshamrani]{prashar2021sdswsn}
D. Prashar, M. Rashid, S.T. Siddiqui, D. Kumar, A. Nagpal, A.S. AlGhamdi, and S.S. Alshamrani.
\newblock SDSWSN---a secure approach for a hop-based localization algorithm using a digital signature in the wireless sensor network.
\newblock \emph{Electronics}, 10(24):\penalty0 3074, 2021.

\bibitem[Rangwani et~al.(2021)Rangwani, Sadhukhan, Ray, Khan, and Dasgupta]{rangwani2021improved}
D. Rangwani, D. Sadhukhan, S. Ray, M.K. Khan, and M. Dasgupta.
\newblock An improved privacy preserving remote user authentication scheme for agricultural wireless sensor network.
\newblock \emph{Transactions on Emerging Telecommunications Technologies}, 32(3):\penalty0 e4218, 2021.

\bibitem[Ren and Yang(2021)]{ren2021kullback}
X.X. Ren and G.H. Yang.
\newblock Kullback--Leibler divergence-based optimal stealthy sensor attack against networked linear quadratic Gaussian systems.
\newblock \emph{IEEE Transactions on Cybernetics}, 52(11):\penalty0 11539--11548, 2021.

\bibitem[Wan et~al.(2020)Wan, Kim, Hovakimyan, Sha, and Voulgaris]{wan2020safety}
W. Wan, H. Kim, N. Hovakimyan, L. Sha, and P.G. Voulgaris.
\newblock A safety constrained control framework for UAVs in GPS denied environment.
\newblock In \emph{2020 59th IEEE Conference on Decision and Control (CDC)}, pages 214--219. IEEE, 2020.

\bibitem[Wang et~al.(2021)Wang, Huang, Wang, and Li]{wang2021secure}
C. Wang, J. Huang, D. Wang, and F. Li.
\newblock A secure strategy for a cyber physical system with multi-sensor under linear deception attack.
\newblock \emph{Journal of the Franklin Institute}, 358(13):\penalty0 6666--6683, 2021.

\bibitem[Wang et~al.(2020)Wang, Ding, and Hu]{wang2020security}
J. Wang, B. Ding, and J. Hu.
\newblock Security control for LPV system with deception attacks via model predictive control: A dynamic output feedback approach.
\newblock \emph{IEEE Transactions on Automatic Control}, 66(2):\penalty0 760--767, 2020.

\bibitem[Wu et~al.(2023)Wu, Yao, Luo, Pan, Sun, Xie, and Wu]{wu2023secure}
C. Wu, W. Yao, W. Luo, W. Pan, G. Sun, H. Xie, and L. Wu.
\newblock A secure robot learning framework for cyber attack scheduling and countermeasure.
\newblock \emph{IEEE Transactions on Robotics}, 39(5):\penalty0 3722--3738, 2023.

\bibitem[Xiao et~al.(2017)Xiao, Dong, Ye, Liu, and Huo]{xiao2017general}
B. Xiao, Q. Dong, D. Ye, L. Liu, and X. Huo.
\newblock A general tracking control framework for uncertain systems with exponential convergence performance.
\newblock \emph{IEEE/ASME Transactions on Mechatronics}, 23(1):\penalty0 111--120, 2017.

\bibitem[Zhang and Li(2024)]{zhang2024stealthy}
D.Y. Zhang and X.J. Li.
\newblock Stealthy attacks against distributed state estimation of stochastic multi-agent systems under composite attack detection mechanisms.
\newblock \emph{Information Sciences}, 672:\penalty0 120584, 2024.

\bibitem[Zhao et~al.(2023)Zhao, Aldyaflah, Gangwani, Joshi, Upadhyay, and Lagos]{zhao2023blockchain}
W. Zhao, I.M. Aldyaflah, P. Gangwani, S. Joshi, H. Upadhyay, and L. Lagos.
\newblock A blockchain-facilitated secure sensing data processing and logging system.
\newblock \emph{IEEE Access}, 11:\penalty0 21712--21728, 2023.







\end{thebibliography}
                                                   

\appendix
\section{Proof of Lemma \ref{lem:delta-y-lower-bound}}\label{appendix: lemma delta-y}
\begin{proof}
With $y_k = y_k^\ast + \delta y_k$, the residual under attack can be
decomposed as
\begin{equation}\label{eq: lemma 1.1}
    r_k = K_k(y_k^\ast - \hat y_k) + K_k\delta y_k
        = r_k^{(0)} + K_k\delta y_k.
\end{equation}

Using the definition of $q_k$, we obtain
\begin{equation}\label{eq: lemma 1.2}
    q_k
    = \max_i \frac{|r_{k,i}|}{T_i}
    \;\ge\;
    \frac{1}{\bar T} \|r_k\|_\infty.
\end{equation}

Employing the triangle inequality on \eqref{eq: lemma 1.1} yields
\begin{equation}\label{eq: lemma 1.3}
    \|r_k\|_\infty
    = \|r_k^{(0)} + K_k\delta y_k\|_\infty
    \ge \|K_k\delta y_k\|_\infty - \|r_k^{(0)}\|_\infty.
\end{equation}

The attack-free bound $q_k^{(0)}\le 1$ implies $\|r_k^{(0)}\|_\infty \le \bar T$.

With \(\|K_k\delta y_k\|_\infty \ge 0\) and \(\|\delta y_k\|_\infty \ge \beta\), we obtain
\begin{equation}
    \|K_k\delta y_k\|_\infty
    \ge \|K_k\|_\infty\,\beta.
\end{equation}

Combining the above inequalities yields
\begin{equation}
    q_k
    \;\ge\;
    \frac{1}{\bar T}
    \big(\|K_k\|_\infty \,\beta - \bar T\big),
\end{equation}
which is the claimed lower bound of the suspect threshold. This proves the lemma.
\end{proof}

\section{Proof of Theorem \ref{them: u_min}}\label{appendix: theorem1}
\begin{proof}
The tracking error dynamics of the defender and attacker are given by
\begin{align}
    e_{k+1} &= A_{\mathrm{cl},k} e_k + B u^d_k + w_k, \label{A.1}\\
    e^{a}_{k+1}
        &= A_{\mathrm{cl},k} e^{a}_k + w^a_k + \Delta x^a_{k+1}. \label{A.2}
\end{align}

Let $\delta e_k := e_k - e_k^{a}$ denote the estimation
difference. Subtracting \eqref{A.2} from \eqref{A.1} yields
\begin{equation}\label{A.3}
    \delta e_{k+1}
    = A_{\mathrm{cl},k}\, \delta e_k
      + B u^d_k
      + \Delta w_k
      - \Delta x^a_{k+1},
\end{equation}
where $\Delta w_k := w_k - w^a_k$ satisfies
$\|\Delta w_k\|_\infty \le 2\bar w$.

Iterating the dynamics from $k^a$ to $k^a+k^{\exp}-1$ gives 
\begin{equation}\label{A.4}
\begin{split}
    &\delta e_{k^a+k^{\exp}}
    = (\Phi(k^a+k^{\exp},k^a)\,\delta e_{k^a}+\sum_{t=0}^{k^{\exp}-1}\\
      &
        \Phi(k^a+k^{\exp},k^a+t+1)
        \big( B u^d_{k^a+t}
            + \Delta w_{k^a+t}
            - \Delta x^a_{k^a+t+1}
        \big).
\end{split}
\end{equation}

Split the contributions into the control term
and the attack term on a specific component:
\begin{equation}\label{A.5}
    S_{\mathrm{ctrl},i} =
    \sum_{t=0}^{k^{\exp}-1}
    (\Phi(k^a+k^{\exp},k^a+t+1)\, B u^{d}_{k^a+t})_i,\\
\end{equation}
\begin{equation}\label{A.6}
    \begin{split}
        S_{\mathrm{att},i} =&
    \sum_{t=0}^{k^{\exp}-1}
    (\Phi(k^a+k^{\exp},k^a+t+1)\cdot\\
    &(\Delta w_{k^a+t}-\Delta x^a_{k^a+t+1}))_i.
    \end{split}
\end{equation}
Using the UES property from Assumption \ref{ass:controller-define}, we obtain
\begin{equation}\label{A.7}
\begin{split}
    |S_{\mathrm{ctrl},i}|
    &\le \sum_{t=0}^{k^{\exp}-1}
            c\rho^{k^{\exp}-t-1}
            \|B\|\,\bar u
    \le
        \frac{c(1-\rho^{k^{\exp}})}{1-\rho}\,
        \|B\|\,\bar u.
\end{split}
\end{equation}

Similarly, using
$\|\Delta w\|\le2\bar w$ and $\|\Delta x^a\|\le \bar T$, one has
\begin{equation}\label{A.8}
    |S_{\mathrm{att},i}|
    \le
        \frac{c(1-\rho^{k^{\exp}})}{1-\rho}\,
        (2\bar w + \bar T),
\end{equation}
where \(\bar T\) is the maximum element of \(\mathbf{T}\).

A sufficient condition for The exposure constraint \eqref{eq: MPC constraint5} is
\begin{equation}\label{A.9}
    |S_{\mathrm{ctrl},i}- S_{\mathrm{att},i}|
    \ge 2\bar T.
\end{equation}

Substituting \eqref{A.7} and \eqref{A.8} into \eqref{A.9} yields
\begin{equation}\label{A.10}
    \frac{c(1-\rho^{k^{\exp}})}{1-\rho}
\left(\|B\|\bar u - (2\bar w + \bar T)\right)
\ge 2\bar T.
\end{equation}

Solving for $\bar u$ gives the lower bound
\begin{equation}\label{A.11}
    \bar u_{\min}
    \ge
    \frac{1}{\|B\|}
    \left(
        \frac{2\bar T(1-\rho)}{c(1-\rho^{k^{\exp}})}
        + 2\bar w + \bar T
    \right).
\end{equation}

Thus, any shake sequence satisfying $\|u^d_k\|\ge \bar u_{\min}$
for $k\in[k^a,k^a+k^{\exp}-1]$ guarantees that the exposure
constraint \eqref{eq: MPC constraint5} is met by time $k^a+k^{\exp}$.
\end{proof}

\section{Proof of Lemma \ref{lem:recursive feasibility}} \label{appendix: Lemma 1}
\begin{proof}
Let \(\{u^{d}_{k^a}, u^{d}_{k^a+1}, \cdots, u^{d}_{k^a+N-1}\}\) denote a feasible solution at \(k^a\). 
We form a solution candidate for \(k^a+1\) as \(\{u^{d}_{k^a+1}, \cdots, u^{d}_{k^a+N-1}, u^d\}\) with \(\|u^d\|\leq \bar u\). 
The original feasible sequence is constructed to guarantee the exposure within time \(k^a+k^{exp}\), thus \(u^{d}_{k^a+N-1}\) ensures exposure within \(k^a+k^{exp}\). By repeatedly shifting the sequence forward by one step, the feasibility is preserved at every time step prior to exposure.
\end{proof}

\section{Proof of Theorem \ref{them:compensability}}\label{appendix: Theorem 2}
\begin{proof}

Iterating the deviation dynamics \ref{eq:cl-dev} for $\ell$ steps gives
\begin{equation}\label{C.2}
e_{k+\ell}
    = \Phi(k+\ell,k) e_k
        + \sum_{i=0}^{\ell-1}
            \Phi(k+\ell,k+i+1)\, d_{k+i}.
\end{equation}

Taking norms and using the UES property, we obtain
\begin{equation}\label{C.4}
    \|e_{k+\ell}\|
        \le \rho^{\ell}\|e_k\| +\frac{c}{1-\rho}\|d_{k+i}\|.
\end{equation}

The bound of the composite disturbance \(d_k=Bu^d_k + w_k\) is 
\begin{equation}\label{C.5}
    \bar d=\|B\|\,\bar u + \bar w.
\end{equation}

Taking $\limsup_{\ell\to\infty}$ and noting
$\rho^\ell\to 0$, we obtain the steady-state deviation bound:
\begin{equation}\label{C.6}
    \limsup_{t\to\infty} \|e_t\|
    \le
    \frac{c}{1-\rho}\,\bar d.
\end{equation}

Substituting \eqref{C.5} into \eqref{C.6}, the condition $\limsup_{t\to\infty}\|e_t\|\le \varepsilon$ is guaranteed if
\begin{equation}\label{C.7}
    \frac{c}{1-\rho}(\|B\|\bar u + \bar w) \le \varepsilon.
\end{equation}

Solving for $\bar u$ yields the compensability condition
\begin{equation}\label{C.8}
    \bar u
    \le
    \frac{(1-\rho)\varepsilon - c\bar w}{c\,\|B\|},
\end{equation}

which proves Theorem \ref{them:compensability}.
\end{proof}

\end{document}